\newtheorem{theorem}{Theorem}
\newtheorem{proposition}[theorem]{Proposition}
\newenvironment{proof}[1][Proof]{\begin{trivlist}
\item[\hskip \labelsep {\bfseries #1}]}{\end{trivlist}}
\begin{document}

\vspace{4cm}
\noindent
{\bf \Large Jack polynomial fractional quantum Hall states and their generalizations}

\vspace{5mm}
\noindent
Wendy Baratta and
Peter J.~Forrester

\noindent
Department of Mathematics and Statistics,
University of Melbourne, \\
Victoria 3010, Australia  \\

\small
\begin{quote}
In the the study of fractional quantum Hall states, a certain clustering condition involving
up to four integers has been identified. We give a simple proof that particular Jack polynomials with
$\alpha = - (r-1)/(k+1)$, $(r-1)$ and $(k+1)$ relatively prime, and with partition given in terms of
its frequencies by $[n_00^{(r-1)s}k 0 ^{r-1}k 0 ^{r-1}k \cdots  0 ^{r-1}  m]$ satisfy this clustering condition. 
Our proof makes essential use of the fact that these Jack polynomials are translationally invariant. We also consider nonsymmetric Jack polynomials, symmetric and nonsymmetric generalized Hermite and Laguerre polynomials, and Macdonald polynomials from the viewpoint of the clustering.
\end{quote}

\section{Introduction}

The symmetric Jack polynomials $P_{\kappa}(z;\alpha)$, $z:=(z_1,\ldots,z_N)$ a
coordinate in $\mathbb C^N$, $\alpha$ a scalar and $\kappa$ a partition, are an orthogonal, 
homogeneous basis for symmetric function generalizing the Schur ($\alpha=1$) and zonal ($\alpha=2$) polynomials. They appear in physics in random 
matrix theory \cite[Ch. 12 \& 13]{Fo10}, \cite{De08} and in the study of quantum many body wave functions
 \cite[Ch. 11]{Fo10}, \cite{BH08a,BH08}.
Here we will be interested in the latter interpretation.

There are two classes of quantum many body systems for which Jack polynomials are relevant,
one involving the $1/r^2$ pair potential in one dimension and the other corresponding to certain fractional quantum Hall states. Regarding the former  \cite[Ch. 11]{Fo10},
with the domain a unit circle, the corresponding Sch\"{o}dinger operator reads
\begin{equation}\label{HC}
H^{(C)}:=-\sum_{j=1}^N\frac{\partial^2}{\partial \theta^2_j}+\frac{\beta}{4}\bigg(\frac{\beta}{2}-1\bigg)\sum_{1\leq j <k \leq N}\frac{1}{\sin^2(\theta_k-\theta_j)/2},
\end{equation}
where  $\beta$ parametrizes the coupling.
With $z_j=e^{i\theta_j}$ the ground state wave function for (\ref{HC}) is proportional to 
\begin{equation}\label{J1}
\psi_0^{(C)}(z):=|\Delta(z)|^{\beta/2},\qquad \Delta(z):=\prod_{1\leq j <k \leq N}(z_j-z_k)
\end{equation}
and with $\alpha:=2/\beta$, a complete set of eigenfunctions is given in terms of Jack polynomials by \cite[eq. (13.199)]{Fo10}
\begin{equation}\label{J1a}
\psi_0^{(C)}(z)z^{-l}P_{\kappa}(z;\alpha)\qquad (l=0,1,\ldots)
\end{equation}
where
\begin{equation}\label{J1b}
z^\kappa:=z_1^{\kappa_1}z_2^{\kappa_2}\ldots z_N^{\kappa_N}
\end{equation}
and for $l>0$ it is required that $\kappa_N=0$.

Next we will revise how certain fractional quantum Hall states relate to Jack polynomials \cite{BH08a,BH08}.
 An infinite family of bosonic  fractional quantum Hall states states, indexed by a positive integer $k$, are due to Read and Rezayi \cite{RR99}. For a system of $kN$ particles, these are defined up to normalization as
\begin{equation}\label{RR}
\psi_{RR}^{(k)}={\rm Sym} \prod_{s=1}^k\prod_{1\leq i_s <j_s \leq N}(z_{i_s}-z_{j_s})^2
\end{equation}
where Sym denotes symmetrization (see (\ref{21}) below). Note that the $kN$ particles are thus partitioned into $k$ groups of $N$. Setting $k=1$ we read off that
$$
\psi_{RR}^{(1)}=\prod_{1\leq j<k\leq N}(z_j-z_k)^2
$$
which is the filling factor $\nu=1/2$ bosonic Laughlin state. For $k=2$ it turns out that \cite{RR99}
\begin{equation}\label{Pf}
\psi_{RR}^{(2)}={\rm Pf}\Big[\frac{1}{z_k-z_l}\Big]_{k,l=1,\ldots,2N} \prod_{1\leq i<j \leq 2N}(z_{i}-z_{j}),
\end{equation}
where the diagonal entry is to be replaced by zero if $k=l$, which is the filling factor $\nu=1$ Moore-Read state \cite{MR91}. 
As noted in \cite{RR99}, $\psi_{RR}^{(k)}$ is
characterized by the requirements that it be symmetric, and exhibit the factorization property
\begin{equation}\label{RRz}
\psi_{RR}^{(k)}(z_1,\ldots,z_{(N-1)k},\underbrace{z,\ldots,z}_{k \;\; {\rm  times}}\,)=\prod_{l=1}^{(N-1)k}(z_l-z)^2\psi_{RR}^{(k)}(z_1,\ldots,z_{(N-1)k}).
\end{equation}
It is at this stage the Jack polynomials show themselves. Thus it is a remarkable finding of recent times \cite{BH08a} that (\ref{RRz}) is satisfied by
\begin{equation}\label{J3a}
\psi_{RR}^{(k)}(z)=P_{(2\delta)^k}(z;-k-1),
\end{equation}
where $\delta:=(N-1,N-2,\ldots,1,0)$, $2\delta$ means each part of $\delta$ is multiplied by $2$, and $(2\delta)^k$ means each part of $2\delta$ is repeated $k$ times.

The relation (\ref{J3a}) is one result in a broader theory relating Jack polynomials to quantum Hall states. This comes about by generalizing (\ref{RRz}) to the so called $(k,r)$ clustering property 
\cite{BH08}
\begin{equation}\label{J4}
\psi^{(k, r)}(z_1,\ldots,z_{(N-1)k},\underbrace{z,\ldots,z}_{k \;\; {\rm  times}}\,)=\prod_{l=1}^{N-k}(z_l-z)^r\psi^{(k,r)}(z_1,\ldots,z_{(N-1)k})
\end{equation} 
(see also
\cite{WW08,LWWW09} in relation to general factorizations of quantum Hall states).
For $k=1$ and $r$ even this, together with the requirement $\psi^{(k,r)}$ be symmetric, implies
\begin{equation}\label{J4a}
\psi^{(1,r)}(z_1,\ldots,z_N)=\prod_{1\leq j<k \leq N}(z_j-z_k)^r
\end{equation}
which is the filling factor $\nu=1/r$ bosonic Laughlin state. For $k=2$,  $N\mapsto 2N$ and
$r$ odd it is known \cite{BH08a} that (\ref{J4}) is satisfied by 
$$
\psi^{(2,r)}(z_1,\ldots,z_{2N})={\rm Pf}\Big[\frac{1}{z_k-z_l}\Big]_{k,l=1,\ldots,2N}\prod_{1\leq i<j \leq 2N}(z_i-z_j)^r
$$
which is the $\nu=1/r$ Moore-Read state.

For general $k\in \mathbb{Z}^+$ it was conjectured in \cite{BH08a} and later proved in \cite{ES09} using methods from conformal field theory 
(see also the related works \cite{BGS09,ERS10,EBS10})
that for $k+1$ and $r-1$
relatively prime, (\ref{J4}) is satisfied by
\begin{equation}\label{J51}
\psi^{(k,r)}(z_1,\ldots, z_N)=P_{\kappa(k,r)}(z_1,\ldots, z_N;-(k+1)/(r-1)),
\end{equation}
where $\kappa(k,r)$ is the staircase partition \cite{JL10}
\begin{equation}\label{J52}
(((\beta+1)r+1)^k,(\beta r+1)^k,\ldots,(r+1)^k).
\end{equation}
In (\ref{J52}) $\beta\in \mathbb{Z}^+$ must be related to $N$ by 
\begin{equation}\label{J53}
N=\frac{k+1}{r-1}+k(\beta+2),
\end{equation}
and the notation $(\kappa_1^{n_1},\kappa_2^{n_2},\ldots,\kappa_p^{n_p})$ means that the part $\kappa_1$ is repeated $n_1$ times, $\kappa_2$ is repeated $n_2$ times etc. Alternatively, if $f_j$ denotes the frequency of the part equal to $j$ in $\kappa$ (e.g. if $\kappa=211100$ then $f_0=2,\,f_1=3,\,f_2=1$), $\kappa$ is specified in terms of its frequencies according to \cite{BH08a}
\begin{equation}\label{J54}
\kappa(k,r)=[k 0^{r-1}k 0^{r-1}k 0^{r-1}k\ldots].
\end{equation}

We see from (\ref{J52}) or (\ref{J54}) that 
\begin{equation}\label{4.1}
\kappa_i-\kappa_{i+k}\geq r
\end{equation}
which in \cite{BH08a} was interpreted as a generalized exclusion principle. The significance of such partitions in Jack polynomial theory was first noticed by Feigin et al. \cite{FJMM02}, who showed that the set
of Jack polynomials $\{P_{\kappa(k,r)+\mu}(z;-(k+1)/(r-1)) \}_\mu$ forms a basis for the set of symmetric functions vanishing when $k+1$ variables coincide.

The Laughlin, Read-Moore and Read-Rezayi states are all translationally invariant, and so satisfy
\begin{equation}\label{Lp}
L^+\psi=0,\qquad L^+:=\sum_{j=1}^N\frac{\partial}{\partial z_j}.
\end{equation}
This can also be interpreted as a highest weight condition in a raising and lowering operator formalism of angular momentum on the sphere, projected onto the plane \cite{LL97}. The companion lowest weight condition is that 
\begin{equation}\label{Lm}
\bigg(\sum_{j=1}^Nz_j^2 \frac{\partial^2}{\partial z_j^2}+N_\phi \sum_{j=1}^Nz_j\bigg)\psi=0
\end{equation}
where $N_\phi$ is interpreted as the monopole charge. When (\ref{Lp}) and (\ref{Lm}) are satisfied, as is the case for Laughlin, Read-Moore and Read-Rezayi states, $N_\phi$ must obey
$$
\sum_{j=1}^Nz_j\frac{\partial}{\partial z_j} \psi=\frac{N}{2}N_\phi\psi.
$$
Most importantly, the Jack polynomials (\ref{J51}) satisfy both (\ref{Lp}) and (\ref{Lm}) and so are well founded quantum Hall states.

The contribution to the study of these so called Jack states in this paper relates to viewing the clustering condition (\ref{J4}), and related factorization formulas, as identities in Jack polynomial theory. We know that symmetric Jack polynomial theory has a number of extensions and generalizations. In particular there are multivariable classical orthogonal polynomials which appear in the study of the eigenfunctions of variants of the Calogero-Sutherland Schr\"{o}dinger operator \cite{BF97b}; there are nonsymmetric versions of the Jack polynomials and the multivariable classical orthogonal polynomials \cite{Ch95, BF98b}; and there are $q$-generalizations by way of Macdonald polynomial theory \cite[Ch.~VI]{Ma95}. It is our aim to initiate a study of the clustering condition (\ref{J4}), and related factorizations in the context of these additional families of polynomials. 

In Section 2 we revise Jack polynomial theory, and its extensions to generalized Hermite and Laguerre polynomials, and Macdonald polynomials, as needed for use in subsequent sections. In Section 3
we show that 
the case $k=1$ of the clustering (\ref{J4}) can be solved in terms of Jack polynomials involving an
arbitrary partition (actually this is an already known result). We provide too a similar solution in
terms of nonsymmetric Jack polynomials, symmetric and nonsymmetric generalized Hermite and Laguerre polynomials, and symmetric Macdonald polynomials (the latter after an appropriate
$(q,t)$-generalization).

In Section 4 we provide a very simple proof that (\ref{J51}) satisfies (\ref{J4}). The main ingredient in our proof is the fact that the Jack polynomials (\ref{J4}) are translationally invariant. This proof also applies to the more general clustering (\ref{25.1}) below, first isolated in \cite{BH08}. We show that the symmetric
generalized Hermite and Laguerre polynomials coincide with the Jack polynomials under the conditions that the latter satisfy (\ref{25.1}). We provide a Macdonald polynomial analogue of (\ref{25.1}), but we do not have a proof.

\section{Preliminary theory}
\subsection{Jack polynomials}
Let $\kappa:=(\kappa_1,\ldots,\kappa_N)$ denote a partition of non-negative integers such that $\kappa_1\geq\kappa_2\geq\ldots\geq \kappa_N$ and $|\kappa|:=\sum_{j=1}^N\kappa_j$ be its modulus, $l(\kappa)$ its length (i.e. number of non-zero parts) and define $z^\kappa$ as in (\ref{J1b}). The monomial symmetric functions $m_\kappa(z)$ are specified by
$$
m_\kappa(z)=\frac{1}{C}{\rm Sym} \, z^{\kappa}
$$
where 
\begin{equation}\label{21}
{\rm Sym}f(z_1,\ldots, z_N)=\sum_{\sigma \in S_N}f(z_{\sigma(1)},\ldots, z_{\sigma(N)})
\end{equation}
and the normalization $C$ is chosen so that the coefficient of $z^\kappa$ in $m_\kappa$ is unity.

Continuing with the definitions, let $<$ be a partial ordering on partitions $|\mu|=|\kappa|$, $\mu\not=\kappa$, specified by $\mu<\kappa$ iff
$$
\sum_{j=1}^p \mu_i \leq \sum_{j=1}^p\kappa_j \qquad (p=1,\ldots,N).
$$
The symmetric Jack polynomial $P_\kappa(z;\alpha)$, labelled by a partition $\kappa$ and dependent on a scalar parameter $\alpha$, can be specified as the polynomial eigenfunction of the differential operator
\begin{equation}\label{H1}
\widetilde{H}^{(C)}:=\sum_{j=1}^N\Big(z_j\frac{\partial}{\partial z_j} \Big)^2+\frac{2}{\alpha}\sum_{1\leq j<k\leq N}\frac{z_j+z_k}{z_j-z_k} \Big(\frac{\partial}{\partial z_j}-\frac{\partial}{\partial z_k} \Big)
\end{equation}
with eigenvalue
\begin{equation}\label{H1a}
e(\kappa;\alpha)=\sum_{j=1}^N\kappa_j(\kappa_j-1)+(\alpha(N-1)+1)|\kappa|-2\alpha\sum_{j=1}^N(j-1)\kappa_j,
\end{equation}
and having the structure
\begin{equation}
P_{\kappa}(z;\alpha)=m_\kappa(z)+\sum_{\mu<\kappa}a_{\kappa \mu}m_\mu(z)
\end{equation}
for some coefficients $a_{\kappa \mu}\in \mathbb{Q}(\alpha)$. In fact $P_{\kappa}(z;\alpha)$ is the unique symmetric polynomial eigenfunction of $\widetilde{H}_N^{(C)}$ with leading term $m_\kappa(z)$
and eigenvalue (\ref{H1a}). We remark too that $\widetilde{H}^{(C)}$ is related to the Calogero-Sutherland operator (\ref{HC}) by
\begin{equation}\label{HH}
|\Delta(z)|^{-1/\alpha}(H^{(C)}-E_0^{(C)})|\Delta(z)|^{1/\alpha}=\widetilde{H}^{(C)},
\end{equation}
where $E_0^{(C)}$ is the ground state energy.

Fundamental to the theory of the integrability properties of (\ref{HC}) is the more general Schr\"{o}dinger operator
\begin{equation}\label{HCE}
H^{(C,Ex)}=-\sum_{j=1}^N\frac{\partial^2}{\partial \theta_j^2}+\frac{\beta}{4}\sum_{1\leq j < k \leq N}\frac{(\beta/2-s_{jk})}{\sin^2(\theta_j-\theta_k)/2}
\end{equation}
where $s_{jk}$ is the operator acting on a function $f(\theta_1,\ldots,\theta_N)$ by interchanging $\theta_j$ and $\theta_k$. The ground state wave function is again proportional to (\ref{J1}). Conjugating by this state as in (\ref{HH}) gives the transformed operator
\begin{align}
\widetilde{H}^{(C,Ex)}=&|\Delta(z)|^{-1/\alpha}(H^{(C,Ex)}-E_0^{(C)})|\Delta(z)|^{1/\alpha} \notag \\
=&\sum_{j=1}^N\Big(z_j\frac{\partial}{\partial z_j}\Big)^2+\frac{(N-1)}{\alpha}\sum_{j=1}^Nz_j\frac{\partial}{\partial z_j } \notag\\
&+\frac{2}{\alpha}\sum_{1\leq j <k \leq N}\frac{z_jz_k}{z_j-z_k}\bigg(\Big(\frac{\partial}{\partial z_j }-\frac{\partial}{\partial z_k } \Big) - \frac{1-s_{jk}}{z_j-z_k} \bigg) \label{HCE1}.
\end{align} 
The significance of (\ref{HCE}) shows itself upon the introduction of the mutually commuting Cherednik operators \cite{Ch95}, \cite[Def. 11.4.3]{Fo10}
\begin{equation}\label{xi}
\xi_i:=\alpha z_i d_i+1-N+\sum_{p=i+1}^Ns_{ip} \qquad (i=1,\ldots,N),
\end{equation}
where $d_i$ denotes the type $A$ Dunkl operator \cite{Du89}, \cite[Def. 11.4.2]{Fo10}
\begin{equation}\label{di}
d_i:=\frac{\partial}{\partial z_i}+\frac{1}{\alpha}\sum_{\substack{k=1 \\ \not= i}}^N\frac{1-s_{jk}}{z_i-z_k}.
\end{equation}
Thus
$$
\widetilde{H}^{(C,Ex)}=\frac{1}{\alpha^2}\sum_{j=1}^n\Big( \xi_i+\frac{N-1}{2} \Big)^2 -E_0^{(C)}.
$$

With $\eta$ denoting a composition $\eta=(\eta_1,\ldots,\eta_N)$ $(\eta_j\in\mathbb{Z}_{\geq0})$, $\{\xi_i\}$ permits a complete set of simultaneous polynomial eigenfunctions $\{ E_{\eta}(z;\alpha)\}_{\eta},$
$$
\xi_i E_\eta(z;\alpha)=\overline{\eta}_i E_\eta(z;\alpha) \qquad (1=1,\ldots,N),
$$
where the eigenvalue $\overline{\eta}_i$ is specified by
\begin{equation}\label{ni}
\overline{\eta}_i=\alpha \eta_i - \#\{k<i|\eta_k\geq \eta_i\}-\#\{k>i|\eta_k>\eta_i\}.
\end{equation}
The $E_\eta(z;\alpha)$ are referred to as the nonsymmetric Jack polynomials, and analogous to (\ref{H1a}) they exhibit the structure
\begin{equation}\label{di1}
E_\eta(z;\alpha)=z^\eta+\sum_{\nu \prec \eta}\widetilde{a}_{\eta \nu} z^\nu.
\end{equation}
With $\rho^+$ denoting the partition corresponding to the composition $\rho$,
in (\ref{di1}) $\prec$ denotes the Bruhat ordering on compositions, defined by the statement that $\nu\prec\eta$ if $\nu^+\prec \eta^+$, or in the case $\nu^+=\eta^+$, if $\nu=\prod_{l=1}^rs_{i_lj_l}\eta$ where $\eta_{i_l}>\eta_{j_l}$, $i_l<j_l$.

The operator (\ref{HCE}) also permits a complete set of symmetric, and anti-symmetric, polynomial eigenfunctions. The complete set of symmetric eigenfunctions are the symmetric Jack polynomials. Since Sym commutes with (\ref{HCE1}) we have 
\begin{equation}\label{J5a}
{\rm Sym} \, E_\eta(z;\alpha)=a_\eta P_{\eta^+}(z;\alpha)
\end{equation}
for some $a_\eta$ (see \cite[eq. (12.101)]{Fo10}.

The complete set of anti-symmetric polynomial eigenfunctions of (\ref{HCE1}) are referred to as the anti-symmetric Jack polynomials \cite{BF99}. They are denoted $S_{\kappa+\delta}(z;\alpha)$ where $\delta$ is as in (\ref{J3a}), and have the structure 
$$
S_{\kappa+\delta}(z;\alpha)=\Delta(z)\Big(m_\kappa+\sum_{\sigma<\kappa}\widehat{a}_{\kappa \sigma}m_\sigma\Big)
$$
(cf. (\ref{H1a})). With
$$
{\rm Asym} \, f(z_1,\ldots, z_N):=\sum_{P\in S_N}\epsilon(P)f(z_{P(1)},\ldots,z_{P(N)}),
$$
where $\epsilon(P)$ denotes the signature of $P$,
analogous to (\ref{J5a}), for $\rho^+=\kappa+\delta$, we have
\begin{equation}\label{J5b}
{\rm Asym} \, E_\rho(z)=c_\rho S_{\kappa+\delta}(z;\alpha)
\end{equation}
for some $c_\rho$ (see \cite[eq. (12.113)]{Fo10}). Furthermore, the symmetric and anti-symmetric Jack polynomials are related by \cite[eq. (12.118)]{Fo10}
\begin{equation}\label{J6a}
S_{\kappa+\delta}(z;\alpha)=\Delta(z)P_{\kappa}(z;\alpha/(1+\alpha)).
\end{equation}

\subsection{Generalized classical polynomials}
The quantum many body system on a circle with $1/r^2$ pair potential, as specified by the Schr\"{o}dinger operator (\ref{HC}), can also be defined on a line with an harmonic confining potential. When generalized to include exchange terms the Schr\"{o}dinger operator for the latter reads \cite[Prop. 11.3.1]{Fo10}
\begin{equation}\label{517}
H^{(H,Ex)}:=-\sum_{j=1}^N\frac{\partial^2}{\partial x_j^2}+\frac{\beta^2}{4}\sum_{j=1}^Nx_j^2+\beta\sum_{1\leq j< k \leq N}\frac{\beta/2-s_{jk}}{(x_j-x_k)^2}.
\end{equation}
This has ground state wave function proportional to
\begin{equation}\label{518}
\psi_0^{(H)}(x)=\prod_{l=1}^Ne^{-\beta x_l ^2/4}\prod_{1\leq j<k\leq N}|x_k-x_j|^{\beta/2},
\end{equation}
and furthermore permits a complete set of eigenfunctions of the form 
$$
\psi_0^{(H)}(x)E_\eta^{(H)}(\sqrt{\beta/2}x;\alpha),
$$
where $\{E_\eta^{(H)}(y;\alpha) \}$ are referred to as the generalized nonsymmetric Hermite polynomials. These polynomials are eigenfunctions of the transformed operator
\begin{align}
\widetilde{H}^{(H,Ex)}&:=-\frac{2}{\beta}(\psi_0^{(H)}(x))^{-1}(H^{(H,Ex)}-E_0^{(H)})\psi_0^{(H)}(x) \notag \\
&=\sum_{j=1}^N\Big(\frac{\partial^2}{\partial y_j^2} -2 y_j \frac{\partial}{\partial y_j}\Big)+\frac{2}{\alpha}\sum_{j<k}\frac{1}{y_j-y_k}\bigg( \Big(\frac{\partial}{\partial y_j}-\frac{\partial}{\partial y_k}\Big)-\frac{1-s_{jk}}{y_j-y_k}\bigg), \label{HEx}
\end{align}
where $E_0^{(H)}$ denotes the ground state energy and we have changed variables $y_j=\sqrt{\beta/2}x_j$.

The operator (\ref{HEx}) permits a decomposition in terms of the generalized Laplacian
$$
\Delta_A:=\sum_{i=1}^Nd_i^2,
$$
where $d_i$ denotes the Dunkl operator (\ref{di}). With the $d_i$ defined in terms of $\{y_i\}$, a direct calculation shows \cite[Prop. 11.5.1]{Fo10}
\begin{equation}\label{J7a}
\Delta_A=\widetilde{H}^{(H,Ex)}+2\sum_{j=1}^Ny_j\frac{\partial}{\partial y_j}.
\end{equation}
Moreover, $\Delta_A$ can be used to generate the $E_\eta^{(H)}$ from the nonsymmetric Jack polynomials according to \cite[eq. (13.91)]{Fo10}
\begin{equation}\label{J7}
{\rm exp} \Big(-\frac{1}{4}\Delta_A \Big)E_\eta(y;\alpha)=E_\eta^{(H)}(y;\alpha).
\end{equation}
And with symmetric $P_\kappa^{(H)}$ and anti-symmetric $S_{\kappa+\delta}^{(H)}$ generalized Hermite polynomials constructed from the $E_\eta^{(H)}$ by the analogue of (\ref{J5a}) and (\ref{J5b}), the appropriate modification of (\ref{J7}) generates these polynomials from their symmetric and anti-symmetric counterparts.

It is well known \cite{OP83} that (\ref{517}) is related to the $A$ type root system. There is also a Calogero-Sutherland system on the half line $x\geq0$ with $B$ type symmetry (unchanged by $x\mapsto -x$), specified by the Schr\"{o}dinger operator
\begin{align}
 H^{(L,Ex)}:=&-\sum_{j=1}^N\frac{\partial^2}{\partial x_j^2}+\frac{\beta^2}{4}\sum_{j=1}^Nx_j^2+\frac{(\beta a + 1)}{2}\sum_{j=1}^N\frac{(\beta a +1) /2-\sigma_j}{x_j^2} \notag\\
 &+\beta\sum_{1\leq j<k\leq N}\bigg(\frac{\beta/2-s_{jk}}{(x_j-x_k)^2}+\frac{\beta/2-\sigma_j \sigma_k s_{jk}}{(x_j+x_k)^2} \bigg). \label{11.54}
 \end{align}
Here  $\sigma_j$ is the operator which replaces the coordinate $x_j$ by $-x_j$.

The ground state wave function is proportional to 
$$
\psi_0^{(L)}(x^2)=\prod_{l=1}^Nx_l^{(\beta a + 1)/2}e^{-\beta x_l^2/4}\prod_{1\leq j < k \leq N}|x_k^2-x_j^2|^{\beta/2}
$$
and there is a complete set of even eigenfunctions of the form \cite{BF98b}
$$
\psi_0^{(L)}(x^2)E_{\eta}^{(L)}\Big( \frac{\beta}{2}x^2;\alpha\Big),
$$
where $\{ E_\eta^{(L)}(y^2;\alpha)\}$ are referred to as the generalized nonsymmetric Laguerre polynomials. The latter are eigenfunctions of the transformed operator
\begin{align}
\widetilde{H}^{(L,Ex)}:=&\frac{2}{\beta}(\psi_0^{(L)})^{-1}(H^{(L,Ex)}-E_0^{(L)})\psi_0^{L}(x^2) \notag \\
=&\frac{1}{4}\sum_{j=1}^N\Big(\frac{\partial^2}{\partial y_j^2}-2y_j\frac{\partial}{\partial y_j}+(2a+1)\frac{1}{y_j}\frac{\partial}{\partial y_j} \Big) \notag 
\\
&+\frac{1}{\alpha}\sum_{j<k}\frac{1}{y_j^2-y_k^2}
\bigg( \Big( y_j\frac{\partial}{\partial y_j}-y_k\frac{\partial}{\partial y_k}\Big)
-\frac{y_j^2+y_k^2}{y_j^2-y_k^2}(1-s_{jk})\bigg),
 \label{LEx}
\end{align}
where as in (\ref{HEx}) we have changed variables $y_j=\sqrt{\beta/2}x_j$. Analogous to (\ref{J7a}), with 
$$
d_i^{(B)}:=\frac{\partial}{\partial y_i}+\frac{1}{\alpha}\sum_{p=1}^N\Big( \frac{1-s_{ip}}{y_i-y_p}+\frac{1-\sigma_i\sigma_p s_{ip}}{y_i+y_p}\Big)+\frac{(a+1/2)}{y_i}(1-\sigma_i)
$$
we have \cite[eq. (11.78)]{Fo10}
\begin{equation}\label{LEy}
\Delta_B:=\sum_{i=1}^N(d_i^{(B)})^2=4 \Big(\widetilde{H}^{(L,Ex)}+\frac{1}{2}\sum_{j=1}^Ny_j\frac{\partial}{\partial y_j}\Big),
\end{equation}
provided $\Delta_B$ is restricted to act on functions even in each $y_j$. We can use this operator to compute the nonsymmetric Laguerre polynomials in terms of the nonsymmetric Jack polynomials according to \cite[eq. (13.116)]{Fo10}
\begin{equation}\label{J9}
{\rm exp}\Big( -\frac{1}{4} \Delta_B \Big)E_\eta(y^2;\alpha)=E_\eta^{(L)}(y^2;\alpha).
\end{equation}

\subsection{Macdonald polynomials}
Macdonald polynomials \cite{Ma95} generalize Jack polynomials. They were introduced into the study of fractional quantum Hall states in the recent work \cite{JL10}.

The symmetric Macdonald polynomials $P_{\kappa}(z;q,t)$
can be uniquely characterized as the symmetric polynomial solutions of the eigenvalue equation
\begin{equation}\label{Ma}
M_1 P_\kappa(z;q,t)=e(\kappa;q,t)P_\kappa(z;q,t),
\end{equation}
with a structure the same as exhibited in (\ref{H1a}) for the symmetric Jack polynomials. Here
\begin{equation}\label{Mb}
M_1:=\sum_{i=1}^N\prod_{\substack{j=1\\\not=i}}^N\frac{tz_i-z_j}{z_i-z_j}T_{q,z_i},
\end{equation}
where $T_{q,z_i}$ acts on $f(z_1,\ldots, z_N)$ by the replacement $z_i \mapsto q z_i$, and the eigenvalue has the explicit form
\begin{equation}\label{Mc}
e(\kappa;q,t)=\sum_{i=1}^Mq^{\kappa_i}t^{N-i}.
\end{equation}
They relate to the Jack polynomials by
\begin{equation}\label{Md}
\lim_{q\rightarrow 1}P_\kappa(z;q,q^{1/\alpha})=P_\kappa(z;\alpha).
\end{equation}

The nonsymmetric Macdonald polynomials $E_\eta(z;q,t)$ can be characterized as the simultaneous polynomial eigenfunctions
$$
Y_iE_\eta(z;q,t)=\overline{\eta}_iE_\eta(z;q,t)\qquad (1=1,\ldots,N)
$$
with structure as in (\ref{di1}). Here
$$
Y_i:=t^{-N+i}T_i\ldots T_{N-1}\omega T_1^{-1}\ldots T_{i-1}^{-1}
$$
where, with $s_i:=s_{i,i+1}$
\begin{align*}
T_i&:=t+\frac{tz_i-z_{i+1}}{z_i-z_{i+1}}(s_i-1), \\
\omega&:=s_{n-1}\ldots s_2 T_{q,z_1},
\end{align*} 
and 
$$
\overline{\eta}_i := q^{\eta_i} t^{-l_\eta'(i)}, \quad
l_\eta'(i)=\#\{j<i|\eta_j\geq \eta_i\}-\#\{j>i|\eta_j>\eta_i\}. 
$$

Introducing the $t$-symmetrization and $t$-antisymmetrization operators by
\begin{equation}\label{UU}
U^+:=\sum_{\sigma\in S_N}T_\sigma,\qquad U^-:=\sum_{\sigma\in S_N}\Big(-\frac{1}{t}\Big)^{l(\sigma)}T_\sigma,
\end{equation}
where $\sigma:=s_{i_l(\sigma)}\ldots s_{i_1}$ is a minimal length decomposition in terms of transpositions, and $T_\sigma:=T_{i_l(\sigma)}\ldots T_{i_1}$, we have \cite{Ma99}
\begin{equation}\label{Me}
U^+E_\eta(z;q,t)=a_\eta(q,t)P_{\eta^+}(z;q,t)
\end{equation}
for some (known) $a_\eta(q,t)$. Also, with $S_{\kappa+\delta}(z;q,t)$ defined as the $t$-antisymmetric polynomial eigenfunctions of the eigenvalue equation 
$$
\Big(\sum_{i=1}^N Y_i\Big) S_{\kappa+\delta}(z;q,t)=\Big( \sum_{i=1}^N\overline{\eta}_i\Big) S_{\kappa+\delta}(z;q,t)
$$
we have
$$
U^- E_{\delta+\eta}(x;q,t)=b_\eta(q,t)S_{\delta+\eta^+}(z;q,t)
$$
for some (known) $b_\eta(q,t)$. And analogous to (\ref{J6a}) these $t$-antisymmetric Macdonald polynomials are related to their symmetric counterparts by \cite{Ma99}
\begin{equation}\label{5Pa}
S_{\delta+\kappa}(z;q,t)=t^{-N(N-1)/2}\Delta_t(z)P_\kappa(z;q,qt),
\end{equation}
where
$$
\Delta_t(z):=\prod_{1\leq j<k \leq N}(t z_j-z_k).
$$
With $t=q^{1/\alpha}$ we see from (\ref{Md}) that in the limit $q\rightarrow 1$ (\ref{5Pa}) reduces to (\ref{J6a}).

\section{The clustering condition for $k=1$}
\subsection{Symmetric Jack polynomials}
Iterating (\ref{J4}) with $k=1$ gives (\ref{J4a}). This is a symmetric function for $r$ even, and an anti-symmetric function for $r$ odd. According to (\ref{J51}), for $r$ even we have 
\begin{equation}\label{J8.1}
\prod_{1\leq j < k \leq N} (z_j-z_k)^r=P_{r\delta}(z_1,\ldots,z_N;-2/(r-1)),
\end{equation}
where the partition $r\delta$ is specified as in (\ref{J3a}). In the context of fractional quantum Hall states, this result was first proved in \cite{BH08}. The method used was to observe that for the product formula (\ref{J4a})
$$
D_i \psi^{(1,r)}(z)=0, \qquad D_i:=\frac{\partial}{\partial z_i}-r\sum_{\substack{j=1 \\ \not=i}}^N\frac{1}{z_i-z_j}.
$$
Consequently
$$
\sum_{j=1}^N(z_jD_j)^2\psi^{(1,r)}(z)=0.
$$
On the other hand, by direct calculation 
$$
\sum_{j=1}^N(z_jD_j)^2=\Big(\widetilde{H}^{(C)}-e(r\delta;\alpha) \Big) \bigg|_{\alpha=-2/(r-1)},
$$
and we know that the unique symmetric polynomial null vector of this latter operator with leading term $z^{r\delta}$ is the Jack polynomial in (\ref{J8.1}).

In fact the result (\ref{J8.1}) is a special case of a more general identity, already known in the Jack polynomial literature \cite{Op98}, \cite[Ex. 12.6 q.5]{Fo10}.

\begin{proposition}\label{Prop1}
Let $r>0$ be even. For a general partition $\kappa$, $l(\kappa)\leq N$,
\begin{equation}\label{578}
P_{r\delta+\kappa}(z;-2/(r-1))=(\Delta(z))^rP_\kappa(z;2/(r+1)).
\end{equation}
\end{proposition}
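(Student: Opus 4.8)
The plan is to identify $(\Delta(z))^r P_\kappa(z;2/(r+1))$ as a symmetric polynomial eigenfunction of the operator $\widetilde{H}^{(C)}$ of (\ref{H1}) at $\alpha = -2/(r-1)$, with the correct eigenvalue and leading term, and then invoke the uniqueness clause stated after (\ref{H1a}). The first step is to pin down the leading monomial. Since $(\Delta(z))^r$ has leading term $z^{r\delta}$ and $P_\kappa(z;2/(r+1)) = m_\kappa + \sum_{\mu<\kappa} a_{\kappa\mu}m_\mu$, the product has leading term $z^{r\delta+\kappa}$, and all other monomials appearing are dominated by $r\delta+\kappa$ in the partial order (this needs the elementary fact that multiplying by $\Delta(z)^r$ preserves dominance, which is standard). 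So the required leading term is in place.

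The substantive step is the conjugation identity for the operator. I would show by direct computation that
$$
(\Delta(z))^{-r}\,\widetilde{H}^{(C)}\big|_{\alpha=-2/(r-1)}\,(\Delta(z))^{r} = \widetilde{H}^{(C)}\big|_{\alpha=2/(r+1)} + (\text{constant}),
$$
acting on symmetric polynomials. This is the Jack-polynomial analogue of the relation (\ref{HH}) between $H^{(C)}$ and $\widetilde H^{(C)}$: conjugating the Calogero–Sutherland-type operator by a power of the Vandermonde shifts the coupling $\alpha \mapsto \alpha/(1+\alpha)$ (note $-2/(r-1)$ maps to $2/(r+1)$ under $\alpha\mapsto\alpha/(1+\alpha)$, which is consistent with the shift already recorded in (\ref{J6a}) for the anti-symmetric case with $r=1$; here the even-$r$ generalization uses $(\Delta(z))^r$ in place of $\Delta(z)$). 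Concretely I would write $\widetilde{H}^{(C)}$ in the form $\sum_j(z_j\partial_j)^2 + \frac{2}{\alpha}\sum_{j<k}\frac{z_j+z_k}{z_j-z_k}(\partial_j-\partial_k)$, substitute $\partial_j \mapsto \partial_j + \frac{r}{?}\cdots$ coming from $(\Delta(z))^{-r}\partial_j(\Delta(z))^r = \partial_j + r\sum_{k\neq j}\frac{1}{z_j-z_k}$, expand, and collect terms; the cross terms involving $\sum_{j\neq k}\frac{1}{(z_j-z_k)^2}$ and $\sum \frac{1}{(z_j-z_k)(z_j-z_l)}$ must combine (using the classical partial-fraction/antisymmetrization identities) into a multiple of the first-order piece plus a constant, with the first-order piece absorbed precisely when $\alpha$ is sent to $\alpha/(1+\alpha)$. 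The main obstacle is this bookkeeping: keeping track of the second-order single-derivative terms, the genuinely second-order $\partial_j\partial_k$ terms (which should cancel on symmetric functions after symmetrization), and the rational prefactors, and checking the coefficient arithmetic lines up with $e(r\delta+\kappa;-2/(r-1)) = e(\kappa;2/(r+1)) + \text{const}$.

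Once the conjugation identity and the eigenvalue check are done, the proof concludes: $(\Delta(z))^r P_\kappa(z;2/(r+1))$ is a symmetric polynomial eigenfunction of $\widetilde{H}^{(C)}|_{\alpha=-2/(r-1)}$ with eigenvalue $e(r\delta+\kappa;-2/(r-1))$ and leading term $m_{r\delta+\kappa}$, hence equals $P_{r\delta+\kappa}(z;-2/(r-1))$ by uniqueness. One caveat worth flagging: at the negative rational value $\alpha = -2/(r-1)$ the Jack polynomial $P_{r\delta+\kappa}$ is only guaranteed to be well-defined because $r\delta+\kappa$ is an admissible (``$(k,r)$-admissible'' type) partition — one should note that $r\delta$ forces the required spacing $\kappa_i - \kappa_{i+1}\ge r$ so that no pole in $\alpha$ is hit, making the uniqueness argument legitimate. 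Alternatively, and perhaps more cleanly, one can prove (\ref{578}) first as an identity of rational functions of $\alpha$ in a neighborhood where both sides are regular and then specialize, but the operator/uniqueness route above is the most direct.
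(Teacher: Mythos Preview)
Your strategy---show that $(\Delta(z))^r P_\kappa(z;2/(r+1))$ is a symmetric eigenfunction of $\widetilde H^{(C)}$ at $\alpha=-2/(r-1)$ with the right leading term and eigenvalue, then invoke uniqueness---is exactly the paper's strategy. The difference is in how the conjugation identity is obtained. You propose to grind it out by substituting $\partial_j\mapsto\partial_j+r\sum_{k\ne j}(z_j-z_k)^{-1}$ and collecting terms. The paper instead passes through the \emph{un}transformed Hamiltonian $H^{(C)}$ via (\ref{HH}) and exploits the one-line observation that the coupling term $\frac{\beta}{4}(\frac{\beta}{2}-1)$ in (\ref{HC}) is invariant under $\beta\mapsto 2-\beta$, i.e.~$1/\alpha\mapsto 1-1/\alpha$. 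This immediately yields the conjugation identity with no bookkeeping: conjugating $\widetilde H^{(C)}$ by $|\Delta|^{1-2/\alpha}$ sends the parameter $\alpha$ to $\alpha/(\alpha-1)$ (equivalently $1/\alpha\mapsto 1-1/\alpha$), and after the change of variables that becomes your $-2/(r-1)\leftrightarrow 2/(r+1)$. What this buys is that the messy cross terms and partial-fraction identities you anticipate never appear.

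One small correction: your parenthetical claim that ``$-2/(r-1)$ maps to $2/(r+1)$ under $\alpha\mapsto\alpha/(1+\alpha)$'' is wrong. The map $\alpha\mapsto\alpha/(1+\alpha)$ is the shift for conjugation by a \emph{single} power of $\Delta(z)$ (as in (\ref{J6a})), not by $\Delta(z)^r$; and numerically $-2/(r-1)\mapsto -2/(r-3)$ under that map, not $2/(r+1)$. The correct involution here is $1/\alpha\mapsto 1-1/\alpha$, which does send $-2/(r-1)$ to $2/(r+1)$. This slip doesn't damage your proposed direct computation, which would of course produce the right target parameter, but it's worth knowing the clean form of the coupling symmetry.
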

\begin{proof}
According to the theory below (\ref{HC})
\begin{equation}\label{579}
(H^{(C)}-E_0^{(C)})\Big( |\Delta(z)|^{1/\alpha}P_\kappa(z;\alpha) \Big)=|\Delta(z)|^{1/\alpha}P_\kappa(z;\alpha).
\end{equation}
Suppose we replace $1/\alpha$ by $1-1/\alpha$ in this equation. Making use of the simple but crucial observation that
$$
H^{(C)}\bigg|_{1/\alpha \mapsto 1-1/\alpha}=H^{(C)}
$$
then applying (\ref{HH}) we see
\begin{equation}\label{580}
\widetilde{H}^{(C)}|\Delta(z)|^{1-2/\alpha}P_\kappa(z;1/(1-1/\alpha))=e(\kappa;1/(1-1/\alpha))|\Delta(z)|^{1-2/\alpha}P_\kappa(z;1/(1-1/\alpha)).
\end{equation}
The next step is to replace $1/\alpha$ by $-\alpha+1/2$, replace $\kappa$ by $\kappa+(\alpha(N-1))^N$ and to use the basic property of Jack polynomials
$$
P_{\kappa+p^N}(z;\alpha)=z^{p^N}P_\kappa(z;\alpha)
$$
to deduce from (\ref{580}) that for $\alpha$ a non-negative integer
\begin{align}
&\widetilde{H}^{(C)}\bigg|_{\alpha \mapsto 1/(-\alpha+1/2)}\Big((\Delta(z))^{2\alpha}P_\kappa(z;1/(\alpha+1/2))\Big) \notag \\
&\hspace{4cm}=e(\kappa+(\alpha(N-1))^N;1/(\alpha+1/2))(\Delta(z))^{2\alpha}P_\kappa(z;1/(\alpha+1/2)) \label{J10}.
\end{align}
Furthermore, we can check from the definition (\ref{H1}) that
\begin{equation}\label{J10a}
e(\kappa+(\alpha(N-1))^N;1/(\alpha+1/2))=e(\kappa+2\alpha\delta;1/(-\alpha+1/2)).
\end{equation}
This tells us that (\ref{J10}) is the eigenequation for the Jack polynomial $P_{\kappa+2\alpha\delta}(z;1/(-\alpha+1/2))$. We know too that the latter is the unique polynomial eigenfunction of this
eigenequation with leading term $z^{\kappa+2\alpha\delta}$ so (\ref{578}) follows.
\hfill $\square$\end{proof}

\subsection{Nonsymmetric Jack polynomials}
A feature of (\ref{578}) is that $r$ must be even. It turns out that in the case of the nonsymmetric Jack polynomials the analogue of $r$ must be odd.

\begin{proposition}\label{Prop2}
For $l>0$ and odd, and $\kappa$ a partition with $l(\kappa)\leq N$ we have 
\begin{equation}\label{J11}
E_{\kappa+l\delta}(z;-2/l)=(\Delta(z))^{l}E_\kappa(z;2/l).
\end{equation}
\end{proposition}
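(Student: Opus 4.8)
The plan is to use the characterization of the nonsymmetric Jack polynomial $E_\eta(z;\alpha)$ as the polynomial simultaneous eigenfunction of the Cherednik operators $\xi_1,\dots,\xi_N$ of (\ref{xi}) having the triangular structure (\ref{di1}) with leading monomial $z^\eta$, and to establish a conjugation identity relating the $\xi_i$ at $\alpha=-2/l$ to the $\xi_i$ at $\alpha=2/l$ via multiplication by $(\Delta(z))^l$. Writing $d_i^{(\alpha)}$ and $\xi_i^{(\alpha)}$ for (\ref{di}) and (\ref{xi}) at the indicated parameter, and noting that for $\kappa$ a partition and $l>0$ the composition $l\delta+\kappa$ is again a partition, with strictly decreasing parts, the conclusion will follow once $(\Delta(z))^lE_\kappa(z;2/l)$ is shown to be a simultaneous $\xi_i^{(-2/l)}$-eigenfunction with the eigenvalues of $E_{l\delta+\kappa}(z;-2/l)$ and with leading monomial $z^{l\delta+\kappa}$.

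The computational heart is the conjugation of the type $A$ Dunkl operator (\ref{di}) by $M_{(\Delta(z))^l}$ (multiplication by $(\Delta(z))^l$). Using $\partial_i\log\Delta(z)=\sum_{k\neq i}(z_i-z_k)^{-1}$ together with the crucial fact that $\Delta(z)$ is antisymmetric, so that $s_{ik}(\Delta(z))^l=(-1)^l(\Delta(z))^l=-(\Delta(z))^l$ for $l$ odd, a direct calculation gives
\[
M_{(\Delta(z))^l}^{-1}\,d_i^{(\alpha)}\,M_{(\Delta(z))^l}=\frac{\partial}{\partial z_i}+l\sum_{k\neq i}\frac{1}{z_i-z_k}+\frac{1}{\alpha}\sum_{k\neq i}\frac{1+s_{ik}}{z_i-z_k}.
\]
Comparing with $d_i^{(\alpha')}$ term by term, the right-hand side equals $d_i^{(-\alpha)}$ exactly when $l=-2/\alpha$; hence at $\alpha=-2/l$ one has $M_{(\Delta(z))^l}^{-1}\,d_i^{(-2/l)}\,M_{(\Delta(z))^l}=d_i^{(2/l)}$. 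Since also $M_{(\Delta(z))^l}^{-1}s_{ip}M_{(\Delta(z))^l}=-s_{ip}$ for $l$ odd, substituting into (\ref{xi}) yields the operator identity
\[
M_{(\Delta(z))^l}^{-1}\,\xi_i^{(-2/l)}\,M_{(\Delta(z))^l}=2(1-N)-\xi_i^{(2/l)}.
\]

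To finish, recall from (\ref{ni}) that, because $\kappa$ is a partition, $\xi_i E_\kappa(z;\alpha)=\big(\alpha\kappa_i-(i-1)\big)E_\kappa(z;\alpha)$; applying the displayed operator identity, $(\Delta(z))^lE_\kappa(z;2/l)$ is a simultaneous eigenfunction of the $\xi_i^{(-2/l)}$ with eigenvalues $2(1-N)-\big((2/l)\kappa_i-(i-1)\big)$. A one-line computation, using that $l\delta+\kappa$ has strictly decreasing parts so that its $\xi_i$-eigenvalue at $\alpha=-2/l$ is $(-2/l)\big(\kappa_i+l(N-i)\big)-(i-1)$, shows that these two expressions agree. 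Since $(\Delta(z))^l=z^{l\delta}+(\text{terms lower in the Bruhat order})$ and $E_\kappa(z;2/l)=z^\kappa+(\text{lower})$, a routine check with the dominance/Bruhat order shows $(\Delta(z))^lE_\kappa(z;2/l)$ has leading monomial $z^{l\delta+\kappa}$ (with coefficient one) and no monomial strictly above it; matching against (\ref{di1}) at parameter $-2/l$ then gives $(\Delta(z))^lE_\kappa(z;2/l)=E_{l\delta+\kappa}(z;-2/l)$, which is (\ref{J11}).

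I expect the main obstacle to be the sign bookkeeping in the conjugation: the appearance of $1+s_{ik}$ rather than $1-s_{ik}$ forces the parameter change $\alpha\mapsto-\alpha$ together with the constraint $l=-2/\alpha$, and it is precisely the factor $(-1)^l$ coming from the antisymmetry of $\Delta(z)$ that makes the argument close up and pins $l$ down as odd. A secondary point worth a remark is the regularity of $E_{l\delta+\kappa}(z;-2/l)$ at this singular value of $\alpha$: this is ensured by the gap condition $(l\delta+\kappa)_i-(l\delta+\kappa)_{i+1}\ge l$, which is what legitimises appealing to the uniqueness built into (\ref{di1}) at parameter $-2/l$.
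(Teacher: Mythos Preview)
Your proof is correct and follows essentially the same route as the paper's: both arguments exploit that for $l$ odd $(\Delta(z))^l$ is antisymmetric, compute the conjugation of $d_i$ and hence of $\xi_i$ by multiplication by $(\Delta(z))^l$ to obtain the identity $M_{(\Delta(z))^l}^{-1}\,\xi_i^{(-2/l)}\,M_{(\Delta(z))^l}=2(1-N)-\xi_i^{(2/l)}$, check that the resulting eigenvalues match those of $E_{l\delta+\kappa}$ at $\alpha=-2/l$, and conclude by uniqueness of the simultaneous eigenfunction with leading monomial $z^{l\delta+\kappa}$. The paper's presentation is slightly terser (it acts directly on $(\Delta(z))^lE_\kappa$ rather than isolating the operator conjugation identity), and it does not single out the regularity issue you flag at the end, but the substance is the same.
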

\begin{proof}
Both sides have the same leading term $z^{\delta l+ \kappa}$. Hence it suffices to show that for $l$ odd $(\Delta(z))^{l}E_\kappa(z;2/l)$ is a simultaneous polynomial eigenfunction of each $\xi_i |_{\alpha=-2/l}$ $(i=1,\ldots, N)$ with eigenvalue $(\overline{\kappa_i+l \delta_i}) |_{\alpha=-2/l}$. Now, we see from (\ref{di}) that for $l$ odd
\begin{align*}
&-\frac{2}{l}z_i d_i\bigg|_{\alpha=-2/l}\bigg((\Delta(z))^l E_\kappa(z;2/l) \bigg) \\
&\hspace{2cm}=-\frac{2}{l}(\Delta(z))^l z_i d_i\bigg|_{\alpha=2/l} E_\kappa(z;2/l) \\
&\hspace{2cm}=(\Delta(z))^l \Big(-\frac{2}{l}z_i\frac{\partial}{\partial z_i}-\sum_{\substack{k=1 \\ \not= i}}^N\frac{(1-s_{ik})}{z_i-z_k} \Big) E_\kappa(z;2/l) 
\end{align*}
Substituting in (\ref{xi}), again making essential use of $l$ being odd, shows
\begin{align*}
&\xi_i\bigg|_{\alpha=-2/l}\Big( (\Delta(z))^l E_\kappa(z;2/l) \Big) \\
&\hspace{2cm}=-(\Delta(z))^l\Big(\xi_i\bigg|_{\alpha=2/l}-2(1-N) \Big) E_\kappa(z;2/l) \\
&\hspace{2cm}=-\bigg(\overline{\kappa}_i\bigg|_{\alpha=2/l}-2(1-N)\bigg)(\Delta(z))^lE_\kappa(z;2/l)\\
&\hspace{2cm}=(\overline{\kappa_i+l\delta_i})\bigg|_{\alpha=-2/l}(\Delta(z))^lE_\kappa(z;2/l),
\end{align*}
where the final equality follows from (\ref{ni}), which is the sought equation.
\hfill $\square$\end{proof}

Note that after writing $l=r-1$, ($r$ even), $\kappa\mapsto\kappa+\delta$ and symmetrizing both sides (\ref{J11}) reads
\begin{equation}\label{12.1}
(\Delta(z))^{r-1}{\rm Asym} \;E_{\kappa+\delta}(z;2/(r-1))={\rm Sym}\; E_{r \delta +\kappa}(z;2/(r-1)).
\end{equation}
Recalling (\ref{J5b}), (\ref{J5a}) and (\ref{J6a}) we see that this reclaims (\ref{578}).

\subsection{Generalized Hermite and Laguerre polynomials}

We will first show that the nonsymmetric generalized Hermite and Laguerre polynomials satisfy a factorization formula structurally identical to (\ref{J11}). By symmetrization we can then deduce the analogues of (\ref{578}).

\begin{proposition}\label{K}
Let $l>0$ be odd, and let $\kappa$ be a partition $l(\kappa)\leq N$. We have
\begin{align}
E_{\kappa+l\delta}^{(H)}(z;-2/l)&=(\Delta(z))^{l}E^{(H)}_\kappa(z;2/l) \label{13.1}, \\
E^{(L)}_{\kappa+l\delta}(z;-2/l)&=(\Delta(z))^{l}E^{(L)}_\kappa(z;2/l).\label{13.2}
\end{align}
Also, with $r$ even and positive 
\begin{align}
P_{\kappa+l\delta}^{(H)}(z;-2/(r-1))&=(\Delta(z))^{r}P^{(H)}_\kappa(z;2/(r+1)), \label{13.3} \\
P^{(L)}_{\kappa+l\delta}(z;-2/(r-1))&=(\Delta(z))^{r}P^{(L)}_\kappa(z;2/(r+1)).\label{13.4}
\end{align}
\end{proposition}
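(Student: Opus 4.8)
The plan is to deduce (\ref{13.1})--(\ref{13.4}) from Propositions \ref{Prop1} and \ref{Prop2} by applying the operators $\exp(-\tfrac14\Delta_A)$ and $\exp(-\tfrac14\Delta_B)$, which by (\ref{J7}) and (\ref{J9}) intertwine the (non)symmetric Jack polynomials with their generalized Hermite and Laguerre counterparts. The one new ingredient is a conjugation identity describing how these Dunkl Laplacians interact with multiplication by a power of a Vandermonde-type factor.

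For (\ref{13.1}) I would first record that, for $l$ odd,
$$
d_i\big|_{\alpha=-2/l}\circ(\Delta(z))^l=(\Delta(z))^l\circ d_i\big|_{\alpha=2/l}\qquad(i=1,\dots,N).
$$
This is exactly the computation carried out inside the proof of Proposition \ref{Prop2} (applied there to $z_i d_i$): since $\Delta(z)$ is antisymmetric and $l$ is odd, each $s_{ik}$ contributes a sign, and the resulting symmetric residue $\bigl(l+\tfrac2\alpha\bigr)\sum_{k\neq i}(z_i-z_k)^{-1}$ vanishes precisely when $\alpha=-2/l$. Squaring and summing over $i$ gives $\Delta_A|_{-2/l}\circ(\Delta)^l=(\Delta)^l\circ\Delta_A|_{2/l}$, hence $\exp(-\tfrac14\Delta_A|_{-2/l})\circ(\Delta)^l=(\Delta)^l\circ\exp(-\tfrac14\Delta_A|_{2/l})$. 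Applying this operator identity to the two sides of (\ref{J11}) and invoking (\ref{J7}) on each side yields (\ref{13.1}).

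For (\ref{13.2}) the argument is identical with $\Delta_B$ in place of $\Delta_A$ and $V:=\prod_{1\le j<k\le N}(y_j^2-y_k^2)$ in place of $\Delta(z)$, everything restricted to functions even in each $y_j$ so that (\ref{J9}) applies and (\ref{J11}), read at the squared variables, becomes $E_{\kappa+l\delta}(y^2;-2/l)=V^l\,E_\kappa(y^2;2/l)$. The needed identity is $d_i^{(B)}|_{-2/l}\circ V^l=V^l\circ d_i^{(B)}|_{2/l}$ for $l$ odd, and proving this is where the real work lies: one tracks how $s_{ip}$ and $\sigma_i\sigma_p s_{ip}$ each flip the sign of $V^l$ ($l$ odd) while $\sigma_i$ leaves it invariant (as $V$ is even in each $y_j$), uses $\tfrac1{y_i-y_p}+\tfrac1{y_i+y_p}=\tfrac{2y_i}{y_i^2-y_p^2}$ to merge the two families of simple poles, and checks that the unwanted symmetric residue $\bigl(2l+\tfrac4\alpha\bigr)\sum_{k\neq i}\tfrac{y_i}{y_i^2-y_k^2}$ again cancels exactly at $\alpha=-2/l$, while the term $\tfrac{a+1/2}{y_i}(1-\sigma_i)$ passes through $V^l$ unchanged (so the parameter $a$ is unaffected). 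Exponentiating and applying to (\ref{J11}), together with (\ref{J9}), gives (\ref{13.2}).

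Finally, (\ref{13.3}) and (\ref{13.4}) follow from (\ref{13.1}) and (\ref{13.2}) by the symmetrization argument already used to reclaim (\ref{578}) from (\ref{J11}), cf.\ (\ref{12.1}). Putting $l=r-1$ (odd, since $r$ is even) and replacing $\kappa$ by $\kappa+\delta$ turns (\ref{13.1}) into $E^{(H)}_{\kappa+r\delta}(z;-2/(r-1))=(\Delta(z))^{r-1}E^{(H)}_{\kappa+\delta}(z;2/(r-1))$. Applying ${\rm Sym}$, and using that $\Delta_A=\sum_id_i^2$ is $S_N$-invariant (so ${\rm Sym}$ and ${\rm Asym}$ commute with passing from the $E^{(H)}_\eta$ to the symmetric and antisymmetric Hermite polynomials), that ${\rm Sym}\bigl((\Delta)^{r-1}g\bigr)=(\Delta)^{r-1}{\rm Asym}(g)$ for $r-1$ odd, and the generalized-Hermite analogues of (\ref{J5a}), (\ref{J5b}) and (\ref{J6a}) (the last reading $S^{(H)}_{\kappa+\delta}(z;\alpha)=\Delta(z)P^{(H)}_\kappa(z;\alpha/(1+\alpha))$, and similarly for $L$), one obtains $P^{(H)}_{\kappa+r\delta}(z;-2/(r-1))=(\Delta(z))^rP^{(H)}_\kappa(z;2/(r+1))$ after comparing the (trivially matching) leading monomials; the Laguerre case (\ref{13.4}) is identical. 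I expect the $B$-type conjugation identity in the proof of (\ref{13.2}) to be the main obstacle: the $A$-type one is essentially already in the proof of Proposition \ref{Prop2}, whereas the $B$-type calculation must in addition dispose of the poles at $y_i=-y_p$ and at $y_i=0$ and of the reflections $\sigma_i$, with a secondary point being to have the generalized-Hermite and -Laguerre versions of (\ref{J5a})--(\ref{J6a}) firmly in hand.
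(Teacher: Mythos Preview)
Your proposal is correct and follows essentially the same route as the paper: establish a conjugation identity of the form $\Delta_A|_{-2/l}\circ(\Delta)^l=(\Delta)^l\circ\Delta_A|_{2/l}$ (and its $B$-type analogue with $\Delta(z^2)$), exponentiate, apply to (\ref{J11}) via (\ref{J7}) and (\ref{J9}), and finally symmetrize using the Hermite/Laguerre analogues of (\ref{J5a}), (\ref{J5b}), (\ref{J6a}) to pass from the nonsymmetric to the symmetric statements. The only minor difference is tactical: you establish the conjugation identity at the level of the individual Dunkl operators $d_i$ and $d_i^{(B)}$ and then square and sum, whereas the paper works directly with the explicit second-order forms of $\Delta_A$ and $\Delta_B$ (as in (\ref{HEx}), (\ref{LEx}), (\ref{LEy})) and invokes the partial-fraction identity $\frac{1}{(a-b)(a-c)}+\frac{1}{(b-a)(b-c)}+\frac{1}{(c-a)(c-b)}=0$ to handle the cross terms.
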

\begin{proof}
First, we can note (\ref{13.3}), (\ref{13.4}) follow from (\ref{13.1}), (\ref{13.2}) by symmetrizing both sides to deduce the analogue of (\ref{12.1}), then using the Hermite and Laguerre analogues of (\ref{J5a}), (\ref{J5b}) and (\ref{J6a}).

To derive (\ref{13.1}) and (\ref{13.2}) we use the explicit forms of $\Delta_A$ and $\Delta_B$ as given in (\ref{J7}), (\ref{LEy}), (\ref{LEx}), together with the identity 
$$
\frac{1}{(a-b)(a-c)}+\frac{1}{(b-a)(b-c)}+\frac{1}{(c-a)(c-b)}=0
$$
to show that for $l$ odd
$$
\Delta_A\Big |_{\alpha=-2/l}\Big((\Delta(z))^lf(z) \Big)=(\Delta(z))^l \Delta_A\Big |_{\alpha=2/l}f(z)
$$
and 
$$
\Delta_B\Big |_{\alpha=-2/l}\Big((\Delta(z^2))^lf(z^2) \Big)=(\Delta(z^2))^l \Delta_B\Big |_{\alpha=2/l}f(z^2).
$$
This shows that when applying the exponential operators in (\ref{J7}) to (\ref{J11}), the nonsymmetric Jack polynomials therein are transferred to their generalized Hermite and Laguerre counterparts, as stated in (\ref{13.1}) and (\ref{13.2}).
\hfill $\square$\end{proof}

Setting $\kappa=0^N$ in Proposition \ref{K} shows that for $l$ odd
\begin{equation}\label{14.1}
E_{l\delta}(z;-2/l)=E_{l \delta}^{(H)}(z;-2/l)=E_{l\delta}^{(L)}(z;-2/l)=(\Delta(z))^l
\end{equation}
and for $r$ even
\begin{equation}\label{14.2}
P_{r\delta}(z;-2/(r-1))=P_{r \delta}^{(H)}(z;-2/(r-1))=P_{r\delta}^{(L)}(z;-2/(r-1))=(\Delta(z))^r.
\end{equation}
The first of these formulas dramatically displays the special properties which take effect for $\alpha=-2/l$, $l$ odd. Thus the generalized Hermite and Laguerre polynomials generally have the structure in terms of the Jack polynomials
\begin{align*}
E_\eta^{(H)}(z;\alpha)&=E_\eta(z;\alpha)+\sum_{|\nu|<|\eta|}b_{\eta \nu}E_\nu(z;\alpha) \\
E_\eta^{(L)}(z;\alpha)&=E_\eta(z;\alpha)+\sum_{|\nu|<|\eta|}\widetilde{b}_{\eta \nu}E_\nu(z;\alpha).
\end{align*}
However for $\alpha=-2/l$, $l$ odd and a special choice of $\eta$, all but the leading term vanishes and the generalized Hermite and Laguerre polynomials become homogeneous.
Similar remarks apply in relation to (\ref{14.2}).

\subsection{Symmetric Macdonald polynomials}
Let us set
\begin{equation}\label{DD}
D_l(z;q)=\prod_{i=1}^lD_1(z;q^{2i+1}),\qquad D_1(z;q):=\prod_{i=1}^{N}\prod_{\substack{j=1\\j\not=i}}^N(qz_j-z_i).
\end{equation}
For all $i,j\in\{1,\ldots,N \},$ $i\not=j$ and $0\leq s \leq 2l$ we see that $D_l(z;q)=0$ when
\begin{equation}\label{20.1}
z_j=z_it q^s,\qquad t=q^{-(2l-1)}.
\end{equation}
This is (a special case of) the wheel condition, first identified in \cite{FJMM03} in relation to the vanishing properties of Macdonald polynomials with $t^{k+1}q^{r-1}=1$, and partitions with parts satisfying
(\ref{4.1}). For $k=1$ and $r$ even the theory of \cite{FJMM03} tells us that the corresponding Macdonald polynomial with partition satisfying (\ref{4.1}) vanishes when (\ref{20.1}) holds with $q\mapsto q^{1/2}$ and $l=r/2$ ($r$ even) and thus that they contain $D_{r/2}(z;q^{1/2})$ as a factor. Consistent with the Jack polynomial identity (\ref{578}), the remaining factor is another Macdonald polynomial.

\begin{proposition}\label{PD}
Let $D_l$ be as in (\ref{DD}), and let $r>0$ be even. We have
\begin{equation}\label{20.c}
P_{\kappa+r\delta}(z;q,q^{-(r-1)/2})=(-q^{-1/2})^{r^2N(N-1)/8}D_{r/2}(z;q^{1/2})P_\kappa(z;q,q^{(r+1)/2}).
\end{equation}
\end{proposition}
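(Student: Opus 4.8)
The plan is to mimic the proof of Proposition~\ref{Prop1}, replacing the Jack eigenoperator $\widetilde H^{(C)}$ and its conjugation identity (\ref{HH}) by the Macdonald operator $M_1$ of (\ref{Mb}) and an appropriate ``conjugation by $D_{r/2}$'' identity. The key structural fact I will need is a $(q,t)$-analogue of the observation that $H^{(C)}$ is invariant under $1/\alpha\mapsto 1-1/\alpha$; at the level of Macdonald theory this should read as a symmetry of $M_1$ (or of a suitable companion operator) under $t\mapsto q/t$, together with a shift/relabelling of the partition that absorbs the discrepancy, exactly as $\kappa\mapsto\kappa+(\alpha(N-1))^N$ was used above.

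Concretely, the steps I would carry out are: (i) record that both sides of (\ref{20.c}) have the same leading monomial $z^{r\delta+\kappa}$ (the leading term of $D_{r/2}(z;q^{1/2})$ is, up to the stated constant, $z^{r\delta}$, since each factor of $D_1(z;q)$ contributes $\prod_{i<j}(-z_i)(-z_j)\cdot(\text{higher})$ appropriately combining to $z^{2\delta}$), so that proving (\ref{20.c}) reduces to checking that the right-hand side is the symmetric polynomial eigenfunction of the Macdonald operator $M_1$ specialised at $t=q^{-(r-1)/2}$ with the correct eigenvalue $e(\kappa+r\delta;q,q^{-(r-1)/2})$; (ii) verify the factorisation/intertwining identity
\[
M_1\big|_{t=q^{-(r-1)/2}}\big(D_{r/2}(z;q^{1/2})\,f(z)\big)=D_{r/2}(z;q^{1/2})\,\big(\widetilde M_1 f\big)(z),
\]
where $\widetilde M_1$ is $M_1$ at $t=q^{(r+1)/2}$ up to an additive/multiplicative scalar shift coming from the explicit $q$-difference action of $M_1$ on the product $D_{r/2}$; this is the $q$-analogue of the computation $|\Delta|^{-1/\alpha}(H^{(C)}-E_0)|\Delta|^{1/\alpha}=\widetilde H^{(C)}$ combined with the $1/\alpha\mapsto 1-1/\alpha$ symmetry; (iii) match eigenvalues, i.e. prove the $q$-analogue of (\ref{J10a}), namely that $e(\kappa;q,q^{(r+1)/2})$ plus the scalar shift from step (ii) equals $e(\kappa+r\delta;q,q^{-(r-1)/2})$, using the explicit form (\ref{Mc}) of the Macdonald eigenvalue; (iv) invoke the uniqueness of the symmetric polynomial $M_1$-eigenfunction with a given leading term and eigenvalue to conclude. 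Throughout, the specialisation $t=q^{-(r-1)/2}$ is the Macdonald shadow of $\alpha=-2/(r-1)$ via (\ref{Md}), and the limit $q\to1$ of (\ref{20.c}) should return (\ref{578}), which is a useful consistency check.

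The main obstacle I expect is step (ii): unlike the Jack case, where $H^{(C)}$ is a genuine differential operator and conjugation by $|\Delta|^{1/\alpha}$ is a clean algebraic identity, here one must compute how the $q$-shift operators $T_{q,z_i}$ inside $M_1$ act on the product $D_{r/2}(z;q^{1/2})=\prod_{i=1}^{r/2}\prod_{a\ne b}(q^{(2i+1)/2}z_b-z_a)$. Applying $T_{q,z_i}$ multiplies each factor containing $z_i$ by a rational function of the $z_j/z_i$, and one must show that after multiplying by the Macdonald coefficient $\prod_{j\ne i}\frac{tz_i-z_j}{z_i-z_j}$ and using $t=q^{-(r-1)/2}$, the poles at $z_i=z_j$ and the apparent poles from the $D_{r/2}$ shift all cancel, leaving $D_{r/2}(z;q^{1/2})$ times the $t=q^{(r+1)/2}$ Macdonald coefficient plus a scalar. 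This is the genuine analogue of the ``wheel condition'' bookkeeping alluded to after (\ref{20.1}) and of the partial-fractions identity used in Proposition~\ref{K}; it is a finite but somewhat intricate rational-function identity. An alternative route that may sidestep some of this is to deduce (\ref{20.c}) directly from the known vanishing theory of \cite{FJMM03}: that theory already guarantees $P_{\kappa+r\delta}(z;q,q^{-(r-1)/2})$ is divisible by $D_{r/2}(z;q^{1/2})$, so one only has to identify the quotient, which by homogeneity, symmetry and a leading-term comparison must be a scalar multiple of $P_\kappa(z;q,q^{(r+1)/2})$ — then pin down the scalar by looking at the coefficient of $z^{r\delta+\kappa}$ using the explicit leading coefficient of $D_{r/2}$. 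I would present the argument via this second route if the divisibility statement of \cite{FJMM03} can be quoted in the precise form needed, and fall back on the explicit operator computation otherwise.
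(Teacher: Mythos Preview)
Your primary plan (steps (i)--(iv)) is exactly the paper's proof: compare leading terms, conjugate $M_1$ through $D_{r/2}(z;q^{1/2})$, match eigenvalues via (\ref{Mc}), and invoke uniqueness. The paper carries this out in a few lines.

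The ``main obstacle'' you anticipate in step (ii) is much simpler than you suggest. One does not need any partial-fraction or wheel-type cancellation: since $D_{r/2}(z;q^{1/2})$ is a product of linear factors $(q^{(2i+1)/2}z_b-z_a)$ over $i=1,\dots,r/2$ and $a\ne b$, applying $T_{q,z_i}$ sends each such factor to another one in the same family, and the ratio telescopes cleanly to
\[
T_{q,z_i}D_{r/2}(z;q^{1/2})=q^{r(N-1)/2}\prod_{j\ne i}\frac{q^{(r+1)/2}z_i-z_j}{q^{-(r-1)/2}z_i-z_j}\,D_{r/2}(z;q^{1/2}).
\]
The denominator here cancels the $t=q^{-(r-1)/2}$ Macdonald coefficient $\prod_{j\ne i}\frac{tz_i-z_j}{z_i-z_j}$ pole-for-pole, leaving exactly the $t=q^{(r+1)/2}$ coefficient; so the intertwiner is $M_1|_{t=q^{-(r-1)/2}}\circ D_{r/2}=q^{r(N-1)/2}\,D_{r/2}\circ M_1|_{t=q^{(r+1)/2}}$ with a pure multiplicative scalar. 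The eigenvalue check is then the one-line identity $q^{r(N-1)/2}e(\kappa;q,q^{(r+1)/2})=e(\kappa+r\delta;q,q^{-(r-1)/2})$ from (\ref{Mc}).

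Your proposed alternative route via \cite{FJMM03} has a genuine gap: divisibility by $D_{r/2}$ together with symmetry, homogeneity, and the correct leading monomial does \emph{not} force the quotient to be a single Macdonald polynomial --- any symmetric homogeneous polynomial of degree $|\kappa|$ with leading term $m_\kappa$ would do. You would still need to show the quotient is an $M_1$-eigenfunction, which brings you back to the operator computation anyway.
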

\begin{proof}
Our strategy is to make use of the characterization of the Macdonald polynomials as eigenfunctions of (\ref{Ma}). We first observe that both sides of (\ref{20.c}) are polynomials with leading term the monomial symmetric function $m_{\kappa+r\delta}(z)$. It remains then to check that the RHS satisfies the same eigenvalue equation as the LHS.

Noting that 
$$
T_{q,z_i}D_{r/2}(z;q^{1/2})=q^{r/2}\frac{(q^{(r+1)/2}z_i-z_j)}{(q^{(r-1)/2}z_i-z_j)}D_{r/2}(z;q^{1/2}),
$$
and recalling the definition (\ref{Mb}), we see
\begin{align}
&M_1\bigg|_{t=q^{-(r-1)/2}}\Big(D_{r/2}(z;q)P_\kappa(z;q,q^{(r+1)/2})\Big) \notag \\ 
&\hspace{1cm}=q^{r(N-1)/2}D_{r/2}(z;q^{1/2})\sum_{i=1}^N\bigg( \prod_{\substack{j=1 \\ \not= i}}\frac{(q^{(r+1)/2}z_i-z_j)}{z_i-z_j}\bigg) \; T_{q,z_i}P_{\kappa}(z;q.q^{(r+1)/2}) \notag\\
&\hspace{1cm}=q^{r(N-1)/2}D_{r/2}(z;q^{1/2}) e(\kappa;q,q^{(r+1)/2})P_\kappa(z;q,q^{(r+1)/2}). \label{20.d}
\end{align}
But according to the definition (\ref{Mc})
$$
q^{r(N-1)/2}e(\kappa;q,q^{(r+1)/2})=e(\kappa;q,q^{(r-1)/2}),
$$
which establishes the sought eigenequation. \hfill $\square$
\end{proof}

We remark that the case $\kappa = 0^N$ of Proposition \ref{PD} was known previously
\cite[Thm.~3.2]{BL08}. Also, we draw attention to the work \cite{Lu10} in which a factorization identity
for $P_\kappa(z;q,q^k)$, with $z$ an infinite number of variables corresponding to the alphabet
${1 - q \over 1 - q^k} \mathbb X$, is established.

\subsection{Nonsymmetric Macdonald polynomials}
In the work \cite{Ka05c} vanishing properties of the nonsymmetric Macdonald polynomials with $t^{k+1}q^{r-1}=1$ were studied. It was found that for $k=1$ the relevant wheel condition is 
$$
z_j=z_it q^s,\qquad t=q^{-(r-1)/2}
$$
where $0\leq s \leq r-2$, and it is required $j<i$ if $s=0$, and $i\not=j$ otherwise. It follows that for $l$ odd
\begin{equation}\label{22.1}
E_{\kappa+l \delta}(z;q,q^{-l/2})=D_{(l-1)/2}(z;q)\prod_{1\leq i < j \leq N}(q^{(l-1)/2}z_j-z_i)f(z),
\end{equation}
where $f(z)$ is a homogeneous polynomial of degree $\kappa$. However, unlike the case of the Jack limit (\ref{J11}), computer algebra computations show that in general $f(z)$ is not itself a single nonsymmetric Macdonald polynomial.

\section{The clustering condition for $k>1$}
\subsection{Jack polynomials}
The case $k=1$ is special because the clustering condition determines the explicit form (\ref{J8.1}) of the wave function. This is not the case for $k>1$, and moreover the solution of (\ref{J8.1}) in terms of the Jack polynomials (\ref{J51}) appears not to be a result available from the existing Jack polynomial literature. Our first point of interest is to provide a self contained derivation from the perspective of Jack polynomial theory. This can be carried out as a corollary of the following general formula, the derivation of which is quite elementary.

\begin{proposition} \label{G}
Let the part $0$ in $\kappa=(\kappa_1,\ldots, \kappa_N)$ have frequency $f_0$ so that $l(\kappa)=N-f_0$, and set $\widetilde{\kappa}=(\kappa_1,\ldots,\kappa_{N-f_0})-(\kappa_{N-f_0})^{N-f_0}$. For the Jack polynomials $P_\kappa(z;\alpha)$ and $P_{\widetilde{\kappa}}(z;\alpha)$ let $\alpha$ be such that they are translationally invariant, and thus
\begin{align}
P_\kappa(z_1,\ldots, z_N;\alpha)&=P_{\widetilde{\kappa}}(z_1-1,\ldots, z_N-1;\alpha)\label{23.1}\\
P_{\widetilde{\kappa}}(z_1,\ldots, z_{N-f_0};\alpha)&= P_{\widetilde{\kappa}}(z_1-1,\ldots, z_{N-f_0}-1;\alpha)\label{23.2}.
\end{align}
We have 
\begin{align}
&P_\kappa(z_1,\ldots, z_N;\alpha)\bigg|_{z_{N-f_0+1}=\cdots=z_N=1} \notag \\
&\hspace{2cm}=\prod_{l=1}^{N-f_0}(1-z_l)^{\kappa_N-f_0}P_{\widetilde{\kappa}}(z_1,\ldots, z_{N-f_0};\alpha). \label{23.3}
\end{align}
\end{proposition}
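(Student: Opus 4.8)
The plan is to exploit translational invariance to reduce the specialization $z_{N-f_0+1}=\cdots=z_N=1$ to an evaluation at the origin. First I would shift all variables by $1$: using \eqref{23.1}, the left-hand side of \eqref{23.3} becomes $P_{\widetilde\kappa}(z_1-1,\ldots,z_{N-f_0}-1,0,\ldots,0;\alpha)$, where there are $f_0$ trailing zeros. So the whole problem is now about the behaviour of a translationally invariant Jack polynomial $P_{\widetilde\kappa}$ when its last $f_0$ variables are set equal to zero. Here $\widetilde\kappa$ has been constructed precisely so that it has exactly $l(\kappa)=N-f_0$ nonzero parts shifted down by $\kappa_{N-f_0}$ — wait, more carefully, $\widetilde\kappa=(\kappa_1,\ldots,\kappa_{N-f_0})-(\kappa_{N-f_0})^{N-f_0}$, so $\widetilde\kappa$ has length at most $N-f_0-1$ and its smallest (possibly zero) part is the $(N-f_0)$-th.

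The key step is then to establish the stability-type identity
\begin{equation*}
P_{\widetilde\kappa}(w_1,\ldots,w_{N-f_0},0,\ldots,0;\alpha)=\prod_{l=1}^{N-f_0}w_l^{\,\kappa_{N-f_0}}\,P_{\widetilde\kappa}(w_1,\ldots,w_{N-f_0};\alpha),
\end{equation*}
after which \eqref{23.3} follows by setting $w_l=z_l-1$ and using \eqref{23.2}. To see this, I would use the standard stability property of Jack polynomials: setting variables to $0$ in $P_\lambda$ in $N$ variables gives $P_\lambda$ in fewer variables if $l(\lambda)$ does not exceed the reduced count, and gives $0$ otherwise. But $P_{\widetilde\kappa}$ is \emph{not} homogeneous (it is a shift of a partition, hence a sum of monomial symmetric functions of the shifted partition together with lower ones obtained by moving boxes around), so I need to be a little careful. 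The cleaner route: write $\widetilde\kappa=\mu+(\kappa_{N-f_0})^{N-f_0}$ where $\mu=(\kappa_1-\kappa_{N-f_0},\ldots,0)$ has length $\le N-f_0-1$. By the elementary property $P_{\lambda+p^M}(z;\alpha)=z_1^p\cdots z_M^p\,P_\lambda(z;\alpha)$ for $M$ variables — but this only holds when the number of variables equals $M$, not when there are extra zero variables. So instead I use homogeneity of the \emph{monomial} structure together with stability applied to $P_\mu$: since $l(\mu)\le N-f_0-1\le N-f_0$, stability gives $P_\mu(w_1,\ldots,w_{N-f_0},0,\ldots,0;\alpha)=P_\mu(w_1,\ldots,w_{N-f_0};\alpha)$, and then translational invariance is not even needed for this sub-step — it is needed only to pass between the two shifted pictures in \eqref{23.1}, \eqref{23.2}.

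Actually the most economical argument avoids the $\mu$ decomposition entirely. I would argue: by \eqref{23.1}, $\text{LHS}=P_{\widetilde\kappa}$ evaluated at $(z_1-1,\ldots,z_{N-f_0}-1,0,\ldots,0)$ in $N$ variables. Now I claim $P_{\widetilde\kappa}$ in $N$ variables, with $f_0$ variables set to $0$, equals $\prod_{l=1}^{N-f_0}(\text{that variable})^{\kappa_{N-f_0}}$ times $P_{\widetilde\kappa}$ in $N-f_0$ variables; this is because in the expansion $P_{\widetilde\kappa}=\sum_{\nu\le\widetilde\kappa}a_{\widetilde\kappa\nu}m_\nu$ every monomial that survives the $f_0$ zero-specializations must have its nonzero exponents confined to the first $N-f_0$ slots, and since $\nu\le\widetilde\kappa$ with $\widetilde\kappa$ having smallest nonzero-slot exponent $\kappa_{N-f_0}$... hmm, this last divisibility claim is exactly where the subtlety lies, because a dominated $\nu$ can have a part smaller than $\kappa_{N-f_0}$. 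So I abandon that and settle on the clean two-line version: factor $\widetilde\kappa=\mu+(\kappa_{N-f_0})^{N-f_0}$, apply $P_{\lambda+p^M}=(z_1\cdots z_M)^p P_\lambda$ in exactly $M=N-f_0$ variables after stability has already reduced to $N-f_0$ variables, and stitch together with \eqref{23.2}. The main obstacle is thus purely bookkeeping: keeping straight which identity operates in how many variables and ensuring $l(\mu)\le N-f_0$ so that stability for $P_\mu$ applies without producing a spurious zero; once that is organized, the proof is a three-step chain (shift by $\eqref{23.1}$; stability plus the $p^M$-shift property; shift back by $\eqref{23.2}$) with no real computation.
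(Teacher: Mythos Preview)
Your final three-step chain---shift via \eqref{23.1}, then stability to drop the $f_0$ zero variables followed by the factorization $P_{\lambda+p^M}=(z_1\cdots z_M)^p P_\lambda$ in $M=N-f_0$ variables, then shift back via \eqref{23.2}---is exactly the paper's proof.

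One bookkeeping correction that untangles the confusion in your intermediate discussion: equation \eqref{23.1} as printed carries a typo on the right-hand side; translation invariance of $P_\kappa$ is the statement $P_\kappa(z)=P_\kappa(z-1)$, so after the first shift you are looking at $P_\kappa(z_1-1,\dots,z_{N-f_0}-1,0,\dots,0)$, \emph{not} $P_{\widetilde\kappa}$. Stability then reduces this to $P_{(\kappa_1,\dots,\kappa_{N-f_0})}$ in $N-f_0$ variables, and it is \emph{this} partition (the nonzero part of $\kappa$) that decomposes as $\widetilde\kappa+(\kappa_{N-f_0})^{N-f_0}$. By your own definition your ``$\mu$'' equals $\widetilde\kappa$, so the line $\widetilde\kappa=\mu+(\kappa_{N-f_0})^{N-f_0}$ is inconsistent, and this is also why your ``key step'' identity for $P_{\widetilde\kappa}$ with trailing zeros cannot hold (stability alone gives no prefactor there). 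Once the subscripts are straightened out, no further argument is needed beyond what you already wrote. (Note also the exponent in \eqref{23.3} should read $\kappa_{N-f_0}$, not $\kappa_N-f_0$.)
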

\begin{proof}
We have 
\begin{align*}
&P_\kappa(z_1,\ldots,z_{N-f_0},\underbrace{1,\ldots,1}_{f_0\;\;times};\alpha) \\
&\hspace{2cm}=P_\kappa(z_1-1,\ldots,z_{N-f_0}-1,\underbrace{0,\ldots,0}_{f_0\;\;times};\alpha) \\
&\hspace{2cm}=P_{\widetilde{\kappa}+(\kappa_{N+f_0})^{N-f_0}}(z_1-1,\ldots,z_{N-f_0}-1;\alpha)\\
&\hspace{2cm}=\prod_{l=1}^{N-f_0}(z_l-1)^{\kappa_{N-f_0}}P_{\widetilde{\kappa}}(z_1-1,\ldots,z_{N-f_0}-1;\alpha)\\
&\hspace{2cm}=\prod_{l=1}^{N-f_0}(z_l-1)^{\kappa_{N-f_0}}P_{\widetilde{\kappa}}(z_1,\ldots,z_{N-f_0};\alpha),
\end{align*}
where the first equality follows from (\ref{23.1}), the second from the stability property of the Jack polynomials, the third from the simple property of Jack polynomials that $P_{\kappa+p^N}(z;\alpha)=z^{p^N}P_\kappa(z;\alpha)$ and the fourth from the assumption (\ref{23.2}).\hfill $\square$
\end{proof}

According to Proposition \ref{G}, in order to verify that (\ref{J51}) satisfies (\ref{J4}), it suffices to establish (\ref{23.1}) and (\ref{23.2}). But these in turn are immediate corollaries of the highest weight condition (\ref{Lp}) (thus (\ref{Lp}) implies $\psi$ must be a symmetric function in $z_i-\overline{z}$, $\overline{z}=\sum_{j=1}^Nz_j/N$; see e.g. \cite{Li10}), which we know from \cite{BH08a} is satisfied by (\ref{J4}).

In \cite{BH08} the class of partitions $\kappa$ for which $P_\kappa(z;-(k+1)/(r-1))$ satisfies the highest weight condition was extended from (\ref{J54}) to include a positive integer $s\geq1$ specified by
\begin{equation}\label{25x}
\kappa(k,r,s)=[n_00^{(r-1)s}k 0 ^{r-1}k 0 ^{r-1}k 0 ^{r-1} k \cdots]
\end{equation}
where $N - l(\kappa) = (k+1)s - 1 =: n_0$ (the case $s=1$ corresponds to (\ref{J54})).
This generalization was itself generalized in \cite{JL10} to include a further positive integer
$1 \le m \le k$ specified by
\begin{equation}\label{25xx}
\kappa(k,r,s,m)=[n_00^{(r-1)s}k 0 ^{r-1}k 0 ^{r-1}k   \cdots 0 ^{r-1} m].
\end{equation}
It follows from Proposition \ref{G} that with $\alpha=-(k+1)/(r-1)$, $(k+1)$ and $(r-1)$ relatively prime
\begin{align}
&P_{\kappa(k,r,s,m)}(z_1,\ldots, z_N;\alpha)\bigg|_{z_{N-n_0+1}=\ldots=z_N=z}\notag \\
&\hspace{2cm}=\prod_{j=1}^{N-n_0}(z_j-z)^{(r-1)s+1}P_{\kappa(k,r,1,m)}(z_1,\ldots,z_{N-n_0})\label{25.1}
\end{align}
(homogeneity of the Jack polynomials has been used to go from setting $z_{N-n_0+1}=\ldots=z_N=1$ to setting $z_{N-n_0+1}=\ldots=z_N=z$).

The partition (\ref{25x}), written therein in terms of occupation numbers, is an example of a staircase partition. The special case of staircase partitions, consisting of a single part $r$ repeated say $g$ times is referred to as a rectangular partition. It was shown in \cite{JL10} that with 
\begin{equation}\label{25.2}
\alpha=-\frac{N+1-g}{r-1},\qquad N\geq 2g
\end{equation}
and $N+1-g$, $r-1$ relatively prime, the Jack polynomial $P_{r^{g}}(z;\alpha)$ satisfies the highest weight condition. It then follows from Proposition \ref{G} that
\begin{equation}\label{26}
P_{r^g}(z_1,\ldots,z_N;-(N+1-g)/(r-1))\bigg|_{z_{g+1}=\ldots=z_N=1}=\prod_{l=1}^{g}(z_l-1)^r
\end{equation} 

We now turn our attention to the case of nonsymmetric Jack polynomials in the context of the clustering condition (\ref{25.1}). Unless $s=1$ and $m=k$, the nonsymmetric counterparts of the Jack polynomials in (\ref{25.1}) do not satisfy the highest weight condition and so Proposition \ref{G} does not apply. In fact the computer algebra computations show that there is no longer a factorization of the same type as (\ref{25.1}). We find instead a structure
\begin{equation}\label{26.1}
E_{\kappa(k,r,s,m)}(z_1,\ldots,z_N;\alpha)\bigg|_{z_{N-n_0+1}=\ldots=z_N=z}=\prod_{j=1}^{N-n_0}(z_j-z)^{(r-1)s}f(z,z_1,\ldots, z_{N-n_0}).
\end{equation} 
This differs from (\ref{25.1}) in that the exponent in the first lot of factorized products is reduced by $1$, and the variable $z$ becomes a part of the remaining factor $f$.

\subsection{Generalized Hermite and Laguerre polynomials}
In Jack polynomial theory the (symmetric) binomial coefficients can be specified by the generalized binomial expansion \cite[eq.~(12.179)]{Fo10}
\begin{equation}\label{B1}
{P_\kappa(1+x;\alpha) \over P_\kappa((1)^N;\alpha)} =
\sum_{\mu \subseteq \kappa} \Big ( { \kappa \atop \mu} \Big )
{P_\mu(x;\alpha) \over  P_\mu((1)^N;\alpha)}.
\end{equation} 
It is known \cite{BF97b} that the symmetric Laguerre polynomials can be expanded in terms of the
symmetric binomial coefficients according to
\begin{eqnarray}\label{B2}
&&P_\kappa^{(L)}(x;\alpha) = (-1)^{|\kappa|}   P_\kappa((1)^N;\alpha) [a+h]_\kappa^{(\alpha)}
\sum_{\mu \subseteq \kappa} \Big ( { \kappa \atop \mu} \Big )
{ (-1)^{|\mu|}  P_\mu(x;\alpha) \over   [a+h]_\mu^{(\alpha)} P_\mu((1)^N;\alpha)},
\end{eqnarray} 
where $h := 1 + (N-1)/\alpha$ and
$$
[u]_\kappa^{(\alpha)} := \prod_{j=1}^N {\Gamma(u - (j-1)/\alpha + \kappa_j) \over
\Gamma(u - (j-1)/\alpha)}.
$$

Suppose now that $\alpha$ and $\kappa$ are such that $P_\kappa(x;\alpha)$ satisfies the highest weight condition. We then have $P_\kappa(1+x;\alpha) = P_\kappa(x;\alpha)$, and this in
turn implies that $ P_\kappa((1)^N;\alpha) = 0$. Moreover, if follows from (\ref{B1}) that we also have
$$
 P_\kappa((1)^N;\alpha)   \Big ( { \kappa \atop \mu} \Big )
{P_\mu(x;\alpha) \over  P_\mu((1)^N;\alpha)} = 0, \qquad \mu \ne \kappa.
$$
Using this in (\ref{B2}), together with the fact that $[a+h]_\kappa^{(\alpha)}/[a+h]_\mu^{(\alpha)}$
must be finite for $\mu \subseteq \kappa$ we see that all terms in (\ref{B2}) must vanish except for the
term $\mu = \kappa$. Thus in this setting we have in general
\begin{equation}\label{B3}
P_\kappa^{(L)}(x;\alpha) = P_\kappa(x;\alpha).
\end{equation} 
Also, by inspection of the orthogonalities of the generalized Hermite and Laguerre polynomials
(see e.g.~\cite[Ch.~13]{Fo10}, it is easy to see that
\begin{equation}\label{B4}
\lim_{a \to \infty} \Big ( {1 \over \sqrt{2a}} \Big )^{|\kappa|}
P_\kappa^{(L)}(a + \sqrt{2a} x;\alpha) = P_\kappa^{(H)}(x;\alpha).
\end{equation}
Suppose  $\alpha$ and $\kappa$ are such that $P_\kappa(x;\alpha)$ satisfies the highest weight condition. Then (\ref{B3}) holds and it substituted in (\ref{B4}) implies
\begin{equation}\label{B5}
P_\kappa^{(H)}(x;\alpha) = P_\kappa(x;\alpha).
\end{equation} 
The identities (\ref{B3}) and (\ref{B5}) tell us that the symmetric generalized   Hermite and Laguerre polynomials satisfy the analogues of (\ref{25.1}) and (\ref{26}).

With regards to the nonsymmetric Hermite and Laguerre polynomials, computer algebra computations indicate a factorization having the structure (\ref{26.1}), but again we have not been able to identify the analogue of the function $f$.

\subsection{Macdonald polynomials}
Set $\alpha = - (k+1)/(r-1)$, $(k+1)$ and $(r-1)$ relatively prime and require that
$z_i = q^{(i-1)/\alpha} z$, $(i=1,\dots,(k+1)s - 1)$. Computer algebra computations indicate that the
Macdonald polynomial analogue of the identity (\ref{25.1}) is
\begin{align}
&P_{\kappa(k,r,s,m)}(z_1,\ldots, z_N;q,q^{1/\alpha})\bigg|_{z_{i}= q^{(i-1)/\alpha} z \: \: (i=1,\dots,
(k+1)s - 1)}\notag \\
&\hspace{2cm}=
\prod_{j=-(r-1)(s-1)}^{r-1}
\prod_{i=s(k+1)}^{N}(z_i-zq^{k/\alpha + j} ) 
P_{\kappa(k,r,1,m)}(z_{s(k+1)},\ldots,z_{N};q,q^{1/\alpha})\label{23.8}.
\end{align}
In the case $k=s=m=1$ we see that this is consistent with the case $\kappa = 0^N$ of (\ref{20.c}).
For $s=1$, $m=k$, the wheel condition \cite{FJMM03} tells us that with
$$
z_1 = x, \quad z_2 = tx_1, \quad \dots, \quad z_k = t^{k-1} x,
$$
the Macdonald polynomial $P_{\kappa(k,r,s,m)}(z;q,q^{1/\alpha})$ must vanish for
$z_j = t^k q^s x$ ($s=0,\dots,r-1$). We see that conjectured identity (\ref{23.8}) is consistent
with this requirement. Furthermore, in the limit $q \to 1$  (\ref{23.8}) reduces to the Jack polynomial
identity (\ref{25.1}).

Our derivation of (\ref{25.1}) relied on the corresponding Jack polynomials satisfying the highest weight condition. The significance of the latter being that it implied the translation invariance conditions (\ref{23.1}), (\ref{23.2}). Let
$$
{\partial \over \partial_q z_i} f(z) :=
{f(z) - T_{q,z_i} f(z) \over (1 - q) z_i}.
$$
The analogue of the highest weight condition for the Macdonald polynomials, $\psi$ say, in (\ref{23.8}) is
\cite{JL10}
\begin{equation}\label{Lqt}
L^{+(q,t)} \psi = 0, \qquad
L^{+(q,t)} := \sum_{i=1}^N \Big ( \prod_{j=1 \atop \ne i}^N 
{t z_i - z_j \over z_i - z_j} \Big ) {\partial \over \partial_q z_i}
\end{equation} 
(compare the definition of $L^{+(q,t)}$ with the definition of $M_1$ in (\ref{Mb})). But as yet we have not seen how to make use of this property in providing a proof of (\ref{23.8}).

It is shown in \cite{JL10} that the Macdonald polynomials $P_{r^g}(z;q,q^{1/\alpha})$, with
$\alpha$ as in (\ref{25.2}), also satisfy the $(q,t)$-highest weight condition (\ref{Lqt}).
Computer algebra computations indicate that
$$
P_{r^g}(z,zq^{1/\alpha},\dots, zq^{(N-g-1)/\alpha},z_{N-g+1},\dots,z_N) =
\prod_{l=N-g+1}^N \prod_{j=0}^{r-1} (z_l - q^{1/\alpha + j}z),
$$
although as with (\ref{23.8}) we are yet to find a proof.

\section{Concluding remarks}
The focus of our study has been the setting within Jack polynomial theory --- interpreted broadly to include generalized Hermite and Laguerre polynomials, and Macdonald polynomials --- of the $(k,r)$ clustering condition (\ref{J4}) and its generalization (\ref{25.1}). In the case $k=1$ we were able to identify more general identities in Jack and Macdonald polynomial theory containing (\ref{J4}) as a special case. For general $(k,r)$, by making essential use of the highest weight condition
(\ref{Lp}), we found a simple derivation of the Jack polynomial solution to (\ref{J4}), and more
generally of the clustering property (\ref{25.1}). However our derivation does not apply to the Macdonald analogue (\ref{23.8}) of this clustering, formulated on the basis of computer algebra computations,
which remains a conjecture. We should point out that there is interest in clustering/vanishing properties of Macdonald polynomials with $t^{k+1} q^{r-1}=1$
for there application to certain statistical mechanical models based on the Temperley-Lieb
algebra \cite{KP07, RSZ07, dGPS09}

To finish, we make note of a $(q,t)$-generalization of the Reed-Rezayi state (\ref{RR}),
$$
\psi_{RR}^{(k;q,t)}(z_1,\dots,z_{kN}) = U^+
\prod_{s=1}^k \prod_{1 \le i_s < j_s \le N} (z_{i_s} - t z_{j_s}) (t z_{i_s} - z_{j_s})
\Big |_{t = q^{-1/(k+1)}},
$$
where $U^+$ is the $t$-symmetrization operator in (\ref{UU}). Thus computer algebra computations indicate that
$$
\psi_{RR}^{(k;q,t)}(z_1,\dots,z_{kN}) \propto
P_{\kappa(k,2,1)}(z_1,\dots,z_{kN};q,q^{-1/(k+1)})
$$
as is consistent with (\ref{23.8}). We remark that $(q,t)$-generalizations of quantum Hall states seem
first to have been considered in the work of Kasatani and Pasquier \cite{KP07}.

\subsection*{Acknowledgements}
This work was supported by the Australian Research Council.


\providecommand{\bysame}{\leavevmode\hbox to3em{\hrulefill}\thinspace}
\providecommand{\MR}{\relax\ifhmode\unskip\space\fi MR }
\providecommand{\MRhref}[2]{%
  \href{http://www.ams.org/mathscinet-getitem?mr=#1}{#2}
}
\providecommand{\href}[2]{#2}

\end{document}